\begin{document}



\providecommand{\QQ}{\mathbb{Q}}
\providecommand{\ZZ}{\mathbb{Z}}
\providecommand{\NN}{\mathbb{N}}
\providecommand{\RR}{\mathbb{R}}
\providecommand{\CC}{\mathbb{C}}

\makeatletter
\def\Ddots{\mathinner{\mkern1mu\raise\p@
\vbox{\kern7\p@\hbox{.}}\mkern2mu
\raise4\p@\hbox{.}\mkern2mu\raise7\p@\hbox{.}\mkern1mu}}
\makeatother

\title[Nemo/Hecke: Computer Algebra and Number Theory Packages for Julia]{Nemo/Hecke: Computer Algebra and Number Theory Packages for the Julia Programming Language}

\author{Claus Fieker}
\affiliation{%
  \institution{TU Kaiserslautern}
  \streetaddress{Fachbereich Mathematik, Postfach 3049}
  \postcode{67653}
  \city{Kaiserslautern} 
  \country{Germany} 
}
\email{fieker@mathematik.uni-kl.de}

\author{William Hart}
\affiliation{%
  \institution{TU Kaiserslautern}
  \streetaddress{Fachbereich Mathematik, Postfach 3049}
  \postcode{67653}
  \city{Kaiserslautern} 
  \country{Germany} 
}
\email{goodwillhart@gmail.com}

\author{Tommy Hofmann}
\affiliation{%
  \institution{TU Kaiserslautern}
  \streetaddress{Fachbereich Mathematik, Postfach 3049}
  \postcode{67653}
  \city{Kaiserslautern} 
  \country{Germany} 
}
\email{thofmann@mathematik.uni-kl.de}

\author{Fredrik Johansson}
\affiliation{%
  \institution{Inria Bordeaux \& Institut de Math\'{e}matiques de Bordeaux}
  \streetaddress{200 Avenue de la Vieille Tour}
  \postcode{33400}
  \city{Talence} 
  \country{France} 
}
\email{fredrik.johansson@gmail.com}


\begin{abstract}
We introduce two new packages, Nemo and Hecke, written in the Julia programming language
for computer algebra and number theory.
We demonstrate that high performance generic
algorithms can be implemented in Julia, without the need to resort to a low-level C
implementation.
For specialised algorithms, we use Julia's efficient
native C interface to wrap
existing C/C++ libraries such
as Flint, Arb, Antic and Singular.
We give examples of how to use Hecke and Nemo and discuss
some algorithms that we have implemented to provide high performance basic
arithmetic.
\end{abstract}

\maketitle

\section{Introduction}

Nemo\footnote{\url{http://nemocas.org}. Nemo and Hecke are BSD licensed.} is a  computer algebra package for the Julia programming language.
The eventual aim is
to provide highly performant commutative algebra, number theory, group theory and discrete
geometry routines.
Hecke is a Julia package that builds on Nemo to cover algebraic number theory.

Nemo consists of two parts: wrappers of specialised C/C++
libraries (Flint \cite{flint}, Arb \cite{arb}, Antic \cite{antic} and Singular
\cite{singular}), and
implementations of generic algorithms and mathematical data
structures in the Julia language. So far the fully recursive, generic constructions include
univariate and multivariate
polynomial rings, power series rings, residue rings (modulo principal ideals),
fraction fields, and matrices.

We demonstrate
that Julia is effective for implementing high performance computer
algebra algorithms. Our implementations also
include a number of
improvements over the state of the art
in existing computer algebra systems.


\section{Computer algebra in Julia}
\label{sect:domains}

Julia \cite{julia} is a modern programming language designed
to be both performant and flexible.
Notable features include an innovative type system,
multiple dispatch,
just-in-time (JIT) compilation
supported by dynamic type inference,
automatic memory management (garbage collection),
metaprogramming capabilities,
high performance builtin collection types (dictionaries, arrays, etc.),
a powerful standard library,
and a familiar imperative syntax (like Python).
Julia also has an efficient native C interface, and more recently a C++ interface.
In addition, Julia provides an interactive console and can be used with Jupyter notebooks.

Julia was designed with high performance numerical algorithms in mind, and provides near
C or Fortran performance for low-level arithmetic. This allows low-level algorithms to be
implemented in Julia itself. However, for us, the main advantage of Julia is its ability
to JIT compile generic algorithms for specific types, in cases where a specific
implementation does not exist in C.
 
One of the most important features from a mathematical point of view is Julia's parametric type system.
For example, in Julia, a 1-dimensional array of integers has type \texttt{Array\{Int, 1\}}.
We make heavy use of the parametric type system in Nemo and Hecke. Generic polynomials
over a ring $R$ have type \texttt{GenPoly\{T\}} where $T$ is the type of elements of the ring $R$.
Parametric types bear some resemblance to C++ template classes, except that in Julia the type $T$ can
be constrained to belong to a specified class of types.

\subsection{Modelling domains in Julia}

Julia provides two levels of types: abstract and concrete types.
Concrete types are like types in Java, C or C++,
etc. Abstract types can be thought of as collections of types. They are used when writing generic
functions that should work for any type in a collection.

To write such a generic function, we first create an abstract type, then we create the individual
concrete types that belong to that abstract type. The generic function is then specified with a type
parameter, \emph{T} say, belonging to the abstract type.

In Julia, the symbol <: is used to specify that a given type belongs to a given abstract type.
For example the built-in Julia type Int64 for 64-bit machine integers belongs to the Julia
abstract type Integer.

Abstract types in Julia can form a hierarchy. For example, the Nemo.Field abstract type belongs
to the Nemo.Ring abstract type. An object representing a field in Nemo has type belonging to
Nemo.Field. But because we define the inclusion Nemo.Field <: Nemo.Ring, the type of such an
object also belongs to Nemo.Ring. This means that any generic function in Nemo which is designed
to work with ring objects will also work with field objects.

Julia always picks the most specific function that applies to a given type. This allows one to
implement a function at the most general level of a type hierarchy at which it applies. One can
also write a version of a given function for specific concrete types. For example, one may wish
to call a specific C implementation for a multiplication algorithm, say, when arguments with a
certain very specific given type are passed.

Another way that we make use of Julia's abstract type system in Nemo/Hecke is to distinguish
between the type of elements of fields, and the fields themselves, and similarly for all other
kinds of domains in Nemo. 

Figure~\ref{fig:types} shows the abstract type hierarchy in Nemo.

\begin{figure}[h]
\centering
\includegraphics[scale=0.19]{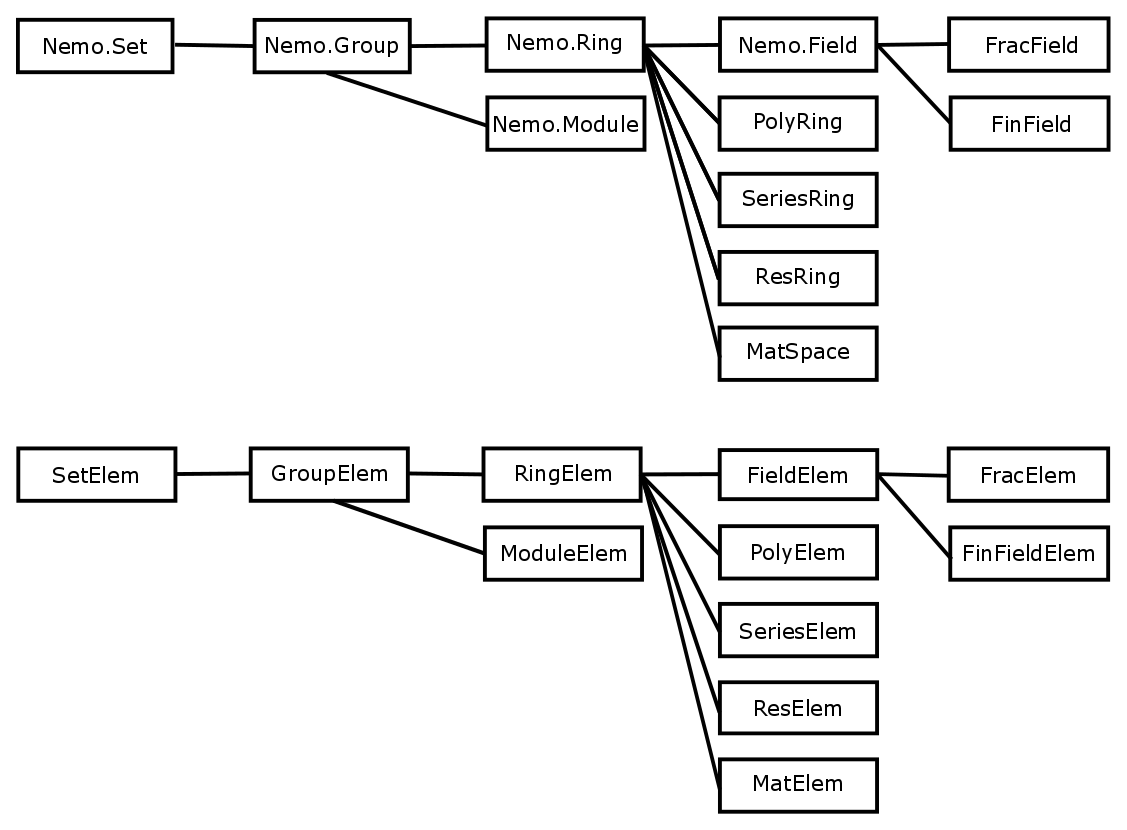}
\caption{The Nemo abstract type hierarchy}
\label{fig:types}
\end{figure}

Naively, one may expect that specific mathematical domains in Nemo/Hecke can be modeled as types
and their elements as objects of the given type. But there are various reasons why this is not a
good model.

As an example, consider the ring $R = \mathbb{Z}/n\mathbb{Z}$. If we were to model the ring $R$
as a type, then the type would need to contain information about the modulus $n$. This is not
possible in Julia if $n$ is an object, e.g.\ a multiprecision integer.
Further, Julia dispatches on type, and each time
we call a generic function with values of a new type, the function is recompiled by the
JIT compiler for that new type. This would result in very poor performance if we were writing a
multimodular algorithm, say, as recompilation would be triggered for each distinct modulus $n$.
For this reason, the modulus $n$ needs to be attached to the elements of the ring, not to the type
associated with those elements.

The way we deal with this is to have special (singleton) objects, known as parent objects, that act
like types, but are in fact ordinary Julia objects. As ordinary objects, parents can contain
arbitrary information, such as the modulus $n$ in $\mathbb{Z}/n\mathbb{Z}$. Each object representing
an element of the ring then contains a pointer to this parent object.

This model of mathematical parent objects is taken from SageMath \cite{sage} which in turn followed
Magma \cite{magma}.

Julia allows ordinary objects to be made callable. We make use of the facility to write coercions
and constructors for elements of mathematical domains in Nemo/Hecke. For example, the following
code constructs $a = 3 \pmod{7}$.

\begin{small}
\begin{verbatim}
R = ResidueRing(ZZ, 7)
a = R(3)
\end{verbatim}
\end{small}

We also make use of the parent object system to encode information such as context objects needed
by C libraries. As Julia objects can have precisely the same bit representation as native C objects,
parent objects can be passed directly to C functions. Additional fields in these objects can be safely
appended if it is desirable to retain more information at the Julia level than the C/C++ level.

Nemo wraps C types provided by libraries such as Flint
for ground domains.
For example, Nemo's \texttt{ZZ} wraps Flint's \texttt{fmpz}
integer type.
Nemo also uses specialised C implementations for nested structures
where available.
For instance, \texttt{PolynomialRing(R, "x")} which constructs
the univariate polynomial ring $R[x]$ automatically switches to
a wrapper of Flint's \texttt{fmpz\_poly} when $R = \ZZ$
instead of using the Julia implementation designed for generic $R$.

\subsection{In-place operations}

C libraries such as Flint allow mutating
objects in-place, which improves performance
especially when making incremental updates
to large objects.
Nemo objects are ostensibly immutable,
but special mutation methods
are provided for use in critical spots.
Instead of writing
\begin{small}
\begin{verbatim}
s += a * b
\end{verbatim}
\end{small}
in Nemo, which creates an object for \texttt{a * b}
and then replaces \texttt{s} with yet another new object, we
create a reusable scratch variable \texttt{t} with the same type as \texttt{a}, \texttt{b} and \texttt{s} and use
\begin{small}
\begin{verbatim}
mul!(t, a, b)
addeq!(s, t)
\end{verbatim}
\end{small}
which creates no new objects and avoids making a copy of the data in \texttt{s}
that is not being modified.

In Julia, in-place operators save not only memory allocation and copying
overheads, but also
garbage collection costs, which may be substantial in the worst case.

\section{Nemo examples}
\label{sect:nemo}

We present a few Nemo code examples which double as
benchmark problems.

\subsection{Multivariate polynomials}

The Fateman benchmark
tests arithmetic in $\mathbb{Z}[x_1,\ldots,x_k]$
by computing
$f \cdot (f+1)$ where $f = (1 + x + y + z + t)^n$. In Nemo,
this is expressed by the following Julia code:

\begin{small}
\begin{verbatim}
R,x,y,z,t = PolynomialRing(ZZ, ["x","y","z","t"])
f = (1+x+y+z+t)^30
f*(f+1)
\end{verbatim}
\end{small}

Table~\ref{tab:fateman} shows timing results compared to
SageMath~7.4, Magma V2.22-5 and Giac-1.2.2
on a 2.2~GHz AMD Opteron 6174.
We used sparse multivariate polynomials in all
cases (this benchmark could also be performed
using nested dense univariate arithmetic, i.e.\ in $\mathbb{Z}[x][y][z][t]$).

Nemo (generic) is our Julia implementation of Johnson's sparse
multiplication algorithm which handles $R[x_1,\ldots,x_k]$
for generic coefficient rings $R$.
Flint (no asm) is a reimplementation of the same
algorithm in C for the special case $R = \mathbb{Z}$,
and Flint (asm) is an assembly optimised version of this code.
Finally, Flint (array) and Giac implement an algorithm designed
for dense polynomials, in which terms are accumulated
into a big array covering all possible exponents.

Table~\ref{tab:pearce} shows timing results on the Pearce
benchmark, which consists of computing $f \cdot g$ where
$f = (1 + x + y + 2z^2 + 3t^3 + 5u^5)^n$,
$g = (1 + u + t + 2z^2 + 3y^3 + 5x^5)^n$.
This problem is too sparse to use the big array method,
so only the Johnson algorithm is used.

\begin{table}
\center
\caption{Time (s) on the Fateman benchmark.}
\begin{small}
\setlength{\tabcolsep}{2pt}
\renewcommand{\arraystretch}{0.95}
\begin{tabular}{c c c c c c c c} \hline
$n$ & Sage- & Magma & Nemo & Flint & Flint & Flint & Giac \\ 
    &   Math    &       & (generic) & (no asm) & (asm) & (array) & \\ \hline
     5  &  0.008 & $\sim$0.01   & 0.004      &        0.002    &        0.002       &      0.0003    &    0.0002 \\
    10  &  0.53  &   0.12  &   0.11     &         0.04    &         0.02       &         0.005   &      0.006 \\
    15  &   10   &    1.9  &      1.6   &         0.53    &          0.30      &           0.08   &         0.11 \\
    20  &   76   &    16   &    14.3    &          6.3    &            2.8     &            0.53   &         0.62 \\
    25  &  426   &   98    &      82    &          39     &         17.4        &           2.5     &         2.8 \\
    30  & 1814   & 439     &  362       &         168     &            82        &            11     &          14 \\
\end{tabular}
\label{tab:fateman}
\end{small}
\end{table}

By default, Nemo uses Flint for multivariate polynomials
over $R = \ZZ$ instead of the generic
Julia code~\footnote{The new Flint type \texttt{fmpz\_mpoly} is presently available in the git version of Flint.
The \texttt{mul\_johnson} (with and without asm enabled) and \texttt{mul\_array} methods were timed via Nemo.}.
We have timed both versions here
to provide a comparison between Julia and C.
It is somewhat remarkable that the generic Julia code comes
within a factor two of our C version in Flint (without asm),
and even runs slightly faster than the C code in Magma.

\begin{table}
\center
\caption{Time (s) on the Pearce benchmark.}
\begin{small}
\setlength{\tabcolsep}{2pt}
\renewcommand{\arraystretch}{0.95}
\begin{tabular}{c c c c c c c} \hline
$n$ & SageMath & Magma & Nemo & Flint & Flint & Giac \\ 
    &          &       & (generic) & (no asm) & (asm) & \\ \hline
     4  &   0.01 & $\sim$0.01  &   0.008          & 0.003         &   0.002             &                     0.004 \\
     6  &   0.20  &   0.08   &     0.07        &     0.03       &       0.03         &                           0.03 \\
     8  &   2.0  &     0.68   &     0.56       &      0.16      &        0.15        &                            0.28 \\
    10  &   11   &     3.0    &       2.2      &       0.77     &         0.71       &                             1.45 \\
    12  &   57   &    11.3    &      8.8       &        3.7     &           3.2      &                                4.8 \\
    14  &  214   &   36.8     &   32.3         &       16       &         12         &                               14 \\
    16  &  785   &   94.0     &   85.5         &       44       &         32         &                               39 \\
\end{tabular}
\label{tab:pearce}
\end{small}
\end{table}

\subsection{Generic resultant}

The following Nemo example code computes a resultant
in the ring $((\operatorname{GF}(17^{11})[y])/(y^3 + 3xy + 1))[z]$,
demonstrating generic recursive rings and polynomial
arithmetic. Flint is used for univariate polynomials over a finite field.

\begin{small}
\begin{verbatim}
R, x = FiniteField(17, 11, "x")
S, y = PolynomialRing(R, "y")
T = ResidueRing(S, y^3 + 3x*y + 1)
U, z = PolynomialRing(T, "z")

f = (3y^2+y+x)*z^2 + ((x+2)*y^2+x+1)*z + 4x*y + 3
g = (7y^2-y+2x+7)*z^2 + (3y^2+4x+1)*z + (2x+1)*y + 1
s = f^12
t = (s + g)^12
resultant(s, t)
\end{verbatim}
\end{small}

This example takes 179907~s in SageMath~6.8, 82 s in Magma V2.21-4
and 0.2 s in Nemo~0.4.

\subsection{Generic linear algebra}

We compute the determinant of a random $80\times80$ matrix
with entries in a cubic number field.
This benchmark tests generic linear algebra over a field
with coefficient blowup. The number field arithmetic is
provided by Antic.

\begin{small}
\begin{verbatim}
QQx, x = PolynomialRing(QQ, "x")
K, a = NumberField(x^3 + 3*x + 1, "a")
M = MatrixSpace(K, 80, 80)()

for i in 1:80
 for j in 1:80
   for k in 0:2
     M[i, j] = M[i, j] + a^k * (rand(-100:100))
   end
 end
end

det(M)
\end{verbatim}
\end{small}

This takes 5893~s in SageMath~6.8, 21.9~s in Pari/GP~2.7.4 \cite{pari}, 5.3~s in Magma V2.21-4,
and 2.4~s in Nemo~0.4.

\section{Generic algorithms in Nemo}
\label{sect:generic}

Many high level algorithms in number theory and computer algebra rely on
the efficient implementation of fundamental algorithms.
We next discuss some of the algorithms
used in Nemo for generic polynomials and matrices.

\subsection{Polynomial algorithms}

For generic coefficient rings, Nemo implements dense univariate polynomials,
truncated power series (supporting both relative and absolute precision models),
and multivariate polynomials in sparse distributed format.

The generic polynomial resultant code in Nemo uses subresultant
pseudoremainder sequences. This code can even be called for polynomials over a
ring with zero divisors, so long as impossible inverses are caught. The exception
can be caught using a try/catch block in Julia and a fallback method called, e.g.\
the resultant can be computed using Sylvester matrices. This allows an algorithm
with quadratic complexity to be called generically, with an algorithm of cubic
complexity only called as backup.

We make use of variants of the sparse algorithms described by Monagan and
Pearce for heap-based multiplication \cite{heapmul}, exact division and division with
remainder \cite{heapdiv} and powering~\cite{heappow}. For powering of polynomials with
few terms, we make use of the multinomial formula.
For multivariate polynomials over $\mathbb{Z}$, we used the generic Julia
code as a template to implement a more optimised C version in Flint.

In the case of pseudodivision for sparse, multivariate polynomials, we extract one
of the variables and then use a heap-based, univariate pseudoremainder
algorithm inspired by an unpublished manuscript of Monagan and Pearce.

We make use of this pseudoremainder implementation to implement the subresultant
GCD algorithm. To speed the latter up, we make use of a number of tricks employed by the
Giac/XCAS system \cite{giac}, as taught to us by Bernard Parisse. The most important
of these is cheap removal of the content that accumulates during the subresultant
algorithm by analysing the input polynomials to see what form the leading coefficient
of the GCD can take.

We also use heuristics to determine which permutation of the variables will lead to
the GCD being computed in the fastest time. This heuristic favours the variable with
the lowest degree as the main variable for the computation, with the other variables 
following in increasing order of degree thereafter. But our heuristic heavily favours
variables in which the polynomial is monic.

\subsection{Matrix algorithms}

For computing the determinant over a generic commutative ring, we implemented a generic
algorithm making use of Clow sequences \cite{clow} which uses only $O(n^4)$ ring operations
in the dimension $n$ of the matrix. However, two other approaches seem to always outperform
it. The first approach makes use of a generic fraction free LU decomposition. This algorithm
may fail if an impossible inverse is encountered. However, as a fallback, we make use of a
division free determinant algorithm. This computes the characteristic polynomial and then
extracts the determinant from that.

For determinants of matrices over polynomial rings, we use an interpolation approach.

We implemented an algorithm for computing the characteristic polynomial due to
Danilevsky (see below) and an algorithm that is based on computing the Hessenberg
form. We also make use of a generic implementation of the algorithm of Berkowitz which
is division free. 

As is well known, fraction free algorithms often improve the performance of matrix
algorithms over an integral domain. For example, fraction free Gaussian elimination
for LU decomposition, inverse, determinant and reduced row echelon form
are well-known.

We have also been able to use this strategy in the computation
of the characteristic polynomial. We have adapted the well-known 1937 method
of Danilevsky \cite{danilevsky} for computing the characteristic polynomial, into
a fraction-free version.

Danilevsky's method works by applying similarity transforms to reduce the
matrix to Frobenius form. Normally such computations are done over a
field, however each of the outer iterations in the algorithm introduce only
a single denominator. Scaling by this denominator allows us to avoid
fractions.
The entries become larger
as the algorithm proceeds because of the scaling, but conveniently
it is possible to prove that the introduced scaling factor can be removed 
one step later in the algorithm. This is an exact division and does not
lead to denominators.

Removing such a factor has to be done in a way that respects the
similarity transforms. We achieve this by making two passes over the matrix
to remove the common factor.

\subsubsection{Minimal polynomial}

Next we describe an algorithm we have implemented for efficient
computation of the minimal polynomial of an $n\times n$ integer matrix $M$.
A standard approach to this problem is known as ``spinning'' \cite{steel}.
We provide a brief summary of the main ideas, to fix notation and then
describe our contribution, which is a variant making use of Chinese
remaindering in a provably correct way.

We first summarise the theory for matrices over a field~$K$. The theory
relies on the following result, e.g.\ \cite{vinberg}.

\begin{theorem}
Suppose $M$ is a linear operator on a $K$-vector space $V$, and that
$V = W_1 + W_2 + \cdots + W_n$ for invariant subspaces $W_i$. Then the
minimal polynomial of $M$ is LCM$(m_1, m_2, \ldots, m_n)$, where $m_i$ is
the minimal polynomial of $M$ restricted to $W_i$.
\end{theorem}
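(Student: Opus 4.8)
The plan is to prove the two divisibilities $m \mid \mu$ and $\mu \mid m$, where $\mu$ denotes the minimal polynomial of $M$ on $V$ and $m := \lcm(m_1,\dots,m_n)$, and then conclude $\mu = m$ since both are monic. The only auxiliary fact needed is elementary: if $W$ is an $M$-invariant subspace and $p \in K[x]$, then $p(M)$ maps $W$ into $W$ and the restriction $p(M)|_W$ coincides with $p(M|_W)$. This follows by induction on $\deg p$ directly from the invariance of $W$ under $M$, and it is what makes the minimal polynomial $m_i$ of $M|_{W_i}$ a meaningful object in the first place.

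For the first divisibility, I would observe that $\mu(M) = 0$ on all of $V$, hence in particular $\mu(M)|_{W_i} = \mu(M|_{W_i}) = 0$ for each $i$. By minimality of $m_i$ this gives $m_i \mid \mu$ for every $i$, and therefore $m = \lcm_i m_i$ divides $\mu$.

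For the reverse direction, write $m = q_i\, m_i$ with $q_i \in K[x]$. For each $i$ and each $w \in W_i$ we then have $m(M)w = q_i(M)\,m_i(M)w = q_i(M)\,\bigl(m_i(M|_{W_i})w\bigr) = 0$, using the restriction fact together with $m_i(M|_{W_i}) = 0$. Since $V = W_1 + \cdots + W_n$, any $v \in V$ may be written $v = w_1 + \cdots + w_n$ with $w_i \in W_i$, so $m(M)v = \sum_i m(M)w_i = 0$. Hence $m(M) = 0$ on $V$, which forces $\mu \mid m$. Combining the two divisibilities and normalising to monic representatives yields $\mu = m$, as claimed.

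The one point that requires care — and the closest thing to an obstacle — is that $V$ is only assumed to be a \emph{sum}, not necessarily a direct sum, of the $W_i$, so there is no unique decomposition of a vector to argue about componentwise. The argument sidesteps this by showing that $m(M)$ annihilates each summand $W_i$ individually; annihilation of every (possibly non-unique) sum of such vectors is then automatic by linearity, and no compatibility between the decompositions is needed.
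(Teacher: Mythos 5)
Your proof is correct. The paper itself does not supply a proof of this theorem --- it simply cites Vinberg's \emph{A Course in Algebra} --- so there is nothing in the source to compare against, but your two-divisibility argument is exactly the standard one, and your closing remark correctly identifies and resolves the one subtlety: since $V$ is only a sum (not a direct sum) of the $W_i$, one must show $m(M)$ annihilates each $W_i$ separately and then appeal to linearity, rather than argue componentwise with respect to a decomposition that need not be unique.
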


The subspaces $W_i$ we have in mind are the following.

\begin{definition}
Given a vector $v$ in a vector space $V$ the \emph{Krylov subspace} $K(V, v)$
associated to $v$ is the linear subspace spanned by $\{v, Mv, M^2v, \ldots\}$.
\end{definition}

Krylov subspaces are invariant subspaces and so these results lead to an
algorithm for computing the minimal polynomial as follows.

Start with $v_1 = e_1 = (1, 0, \ldots, 0)$ and let $W_1 = K(V, v_1)$.
For each $i > 1$ let $v_i$ be the first standard basis vector $e_j$ that is
linearly independent of $W_1 + W_2 + \cdots + W_{i-1}$. Set
$W_i = K(V, v_i)$. 
By construction, $V = W_1 + W_2 + \cdots + W_n$, and the minimal polynomial
of $M$ is the least common multiple of the minimal polynomials $m_i$ of
$M$ restricted to the $W_i$.

The minimal polynomials $m_i$ are easy to compute.
For, if $v$, $Mv$, $M^2v, \ldots, M^{d-1}v$ is a basis for $W_i$,
there is a linear relation 
$$M^dv + c_{d-1}M^{d-1}v + \cdots + c_1Mv + c_0v = 0,$$
for some $c_i \in K$, and no smaller such relation. Letting
$$m_i(T) = c_0 + c_1T + \cdots + T^d,$$ we have $m_i(M)(v) = 0$.
One sees that $m_i(M)(f(M)v) = 0$ for all polynomials $f(T) \in K[T]$
and so $m_i(M)$ annihilates all of $W_i$. Thus $m_i$ is the required
minimal polynomial. 
 
To efficiently implement this algorithm, we keep track of and
(incrementally) reduce a matrix $B$ whose rows consist of all the
linearly independent vectors from the Krylov sequences computed so far.
Any column without a pivot in this reduced matrix $B$ corresponds to a
standard basis vector independent of the Krylov subspaces computed
so far. As each new Krylov subspace is computed, we append the corresponding
vectors to the matrix $B$, and reduce them.

It is also possible to define the minimal polynomial of a matrix $M$ over
$\mathbb{Z}$, since the null ideal $N_D(M) = \{f(T) \in D[T] \;|\; f(M) = 0\}$ of a
matrix $M$ over an integrally closed domain $D$ is principal \cite{brown}.

In the case $D = \mathbb{Z}$, we can define the minimal polynomial $m(T)$ of $M$
to be a (primitive) generator of this ideal. We have that $m(T)$ is monic since
the characteristic polynomial of $M$ is monic and an element of the null ideal.
By Gauss' Lemma for polynomials, this argument also shows that the minimal
polynomial of $M$ is the same as that of $M$ considered as a matrix over
$\mathbb{Q}$.

We have been informed by personal communication that the algorithm used by
Magma \cite{magma} for minimal polynomial computation over $\mathbb{Z}$ uses
spinning, however it uses a $p$-adic approach. Unfortunately
we have been unable to find a publication outlining their method.

We describe a multimodular variant of the spinning approach. The idea is to
reduce the matrix $M$ modulo many small primes $p$ and apply the method
described above over the field $\mathbb{Z}/p\mathbb{Z}$, for each prime $p$.
We then combine the minimal polynomials modulo the various primes $p$ using
Chinese remaindering.

The minimal polynomial of the reduction $M_{(p)}$ of $M$ modulo $p$ is the
reduction modulo $p$ of the minimal polynomial $m(T)$ of $M$ for all but
finitely many ``bad'' primes (see \cite{giesbrecht} Lemma 2.3). Bad primes
are detected if the degrees of the minimal polynomials modulo $p$ change at
any point during the algorithm.

Whilst bounds on the number of primes required in the Chinese remaindering
exist, e.g.\ based on Ovals of Cassini, these bounds are typically extremely
pessimistic. It is also unfortunately too expensive to simply evaluate the
minimal polynomial $m(T)$ at the matrix $M$ and compare with zero.

We obtain a better termination criterion by allowing a small amount of
information to 'leak' from the modulo~$p$ minimal polynomial computations.
Namely, for one of the (good) primes $p$, we record which standard
basis vectors $v_i$ were used to generate the Krylov subspaces $W_i$ when
computing the minimal polynomial of $M_{(p)}$. Recall that
$V = W_1 + W_2 + \cdots + W_n$, where $M_{(p)}$ is thought of as a linear
operator on $V$.

Thinking of $M$ as a linear operator on a $\mathbb{Q}$-vector space $V'$,
and letting $W'_i = K(V', v'_i)$, where $v'_i$ is the lift of $v_i$ to
$\mathbb{Q}$, it is clear that $V' = W'_1 + W'_2 + \cdots + W'_n$.

Thus, if $m(T)$ is believed to be the minimal polynomial of $M$,
e.g.\ because the Chinese remaindering has stabilised, then if $m(M)v'_i = 0$
for all $i$, then $m(T)$ is the minimal polynomial of $M$ over $\mathbb{Q}$.
This follows because if $m(M)v'_i = 0$ then $m(M)$ annihilates all of $W'_i$ for each $i$.
Thus it annihilates all of $V'$.

The cost of computing the $m(M)v'_i$ is usually low
compared to computing $m(M)$ directly, since it consists of matrix-vector
rather than matrix-matrix multiplications.

In the worst case this algorithm requires $O(n^4)$ operations over $\mathbb{Z}$
(once we encounter a good prime), but this is far from the generic case, even when
the minimal polynomial is not the characteristic polynomial.

In practice our multimodular algorithm seems to slightly outperform Magma on the
examples we have tried, including matrices with minimal polynomial smaller 
than the characteristic polynomial.
A generic version of the algorithm over fields is implemented in Julia code in
Nemo, and an efficient version over $\mathbb{Z}$ using the Chinese remaindering
trick is implemented in Flint and made available in Nemo.

\section{C/C++ wrappers in Nemo}
\label{sect:wrappers}

\subsection{Flint: arithmetic and number theory}

Flint provides arithmetic over
$\mathbb{Z}$, $\mathbb{Q}$, $\mathbb{Z}/n\mathbb{Z}$,
$\operatorname{GF}(q), \mathbb{Q}_p$
as well as matrices, polynomials and power series over most
of these ground rings.
Nemo implements elements of these rings as thin wrappers of the
Flint C types.

Flint uses a number of specialised techniques for each domain.
For example, Flint's multiplication in $\mathbb{Z}[x]$
uses a best-of-breed algorithm which selects classical multiplication,
Karatsuba multiplication, Kronecker segmentation or a 
Sch\"{o}nhage-Strassen FFT, with accurate cutoffs
between algorithms, depending on the degrees and coefficient sizes.

In some cases, Flint provides separate implementations
for word-size and arbitrary-size coefficients.
Nemo wraps both versions and transparently selects the
optimised word-size version when possible.

Additional Flint algorithms wrapped by Nemo include
primality testing, polynomial factorisation, LLL, and Smith and Hermite
normal forms of integer matrices.

\subsection{Arb: arbitrary precision ball arithmetic}

Computing over $\mathbb{R}$ and $\mathbb{C}$ requires
using some approximate model for these fields.
The most common model is floating-point arithmetic.
However, for many computer algebra algorithms,
the error analysis necessary to guarantee correct
results with floating-point arithmetic
becomes impractical.
Interval arithmetic solves this problem by effectively making
error analysis automatic.

Nemo includes wrapper code for the C library Arb, which implements real numbers as
arbitrary-precision midpoint-radius intervals (balls) $[m \pm r]$
and complex numbers as rectangular boxes $[a \pm r_1]$ + $[b \pm r_2] i$.
Nemo stores the precision in the parent object.
For example, \texttt{R = ArbField(53)}
creates a field of Arb real numbers with 53-bit precision.
Arb also supplies types for polynomials, power series and matrices
over $\mathbb{R}$ and $\mathbb{C}$, as well as transcendental functions.
Like Flint, Arb systematically uses asymptotically fast algorithms
for operations such as polynomial multiplication, with tuning
for different problem sizes.

Many problems can be solved using lazy evaluation: the user can
try a computation with some tentative precision $p$ and restart
with precision $2p$ if that fails. The precision can be set
optimally when a good estimate for the minimal
required~$p$ is available; that is, the intervals
can be used as if they were plain floating-point numbers, and the automatic
error bounds simply provide a certificate.

Alternative implementations of $\mathbb{R}$ and $\mathbb{C}$
may be added to Nemo in the future.
For example, it would sometimes be more convenient to use a lazy
bit stream abstraction in which individual numbers
store a symbolic DAG representation to allow automatically
increasing their own precision.


\subsection{Antic: algebraic number fields}

Antic is a C library, depending on Flint, for
doing efficient computations in algebraic number fields. 
Nemo uses Antic for number field element arithmetic. We briefly describe some
of the techniques Antic uses for fast arithmetic, but
refer the reader to the article \cite{antic} for full details.

Antic represents number field elements as Flint polynomials thereby benefiting
from the highly optimised polynomial arithmetic in Flint. However, a few
more tricks are used.

Firstly, Antic makes a number of precomputations which are stored in a context
object to speed up subsequent computations with number field elements. Number field parent
objects in Nemo consist precisely of these context objects.

The first is a precomputed inverse of the leading coefficient of the defining
polynomial of the number field. This helps speed up reduction modulo the defining polynomial.

The second is an array of precomputed powers $x^i$ modulo the
defining polynomial $f(x)$. This allows fast reduction of polynomials whose degree exceeds
that of the defining polynomial, e.g.\ in multiplication of number field elements $g$ and $h$
where one wants wants to compute $g(x)h(x) \pmod{f(x)}$.

The third precomputation is of Newton sums $S_k = \sum_{i=1}^n \theta_i^k$ where the $\theta_i$
are the roots of the defining polynomial $f(x)$. These Newton sums are precomputed using
recursive formulae as described in \cite{Cohen1993}. They are
used to speed up the computation of traces of elements of the number field.

Norms of elements of number fields are computed using resultants, for which we use
the fast polynomial resultant code in Flint. Inverses of elements are computed using the fast polynomial
extended GCD implementation in Flint.

Antic also offers the ability to do multiplication of number field elements without
reduction. This is useful for speeding up the dot products that occur in matrix
multiplication, for example. Instead of reducing after every multiplication, the unreduced
products are first accumulated and their sum can be reduced at the end of the dot product.

To facilitate delayed reduction, all Antic number field elements are allocated with
sufficient space to store a full polynomial product, rather than the reduction of such a product.

\subsection{Singular: commutative algebra}

Singular \cite{singular} is a C/C++ package for polynomial systems and algebraic
geometry. Recently, we helped prepare a Julia package called Singular.jl, which is compatible with
Nemo. It will be described in a future article.

\section{Hecke: algebraic number theory in Julia}
\label{sect:hecke}

Hecke is a tool for algebraic number theory, written in Julia.
Hecke includes the following functionality:
\begin{itemize}
\setlength{\itemsep}{1pt}
\setlength{\parskip}{0pt}
\setlength{\parsep}{0pt}
\item
  element and ideal arithmetic in orders of number fields,
\item
  class group and unit group computation,
\item
  verified computations with embeddings and
\item
  verified residue computation of Dedekind zeta functions.
\end{itemize}

Hecke is written purely in Julia, though it makes use of Flint, Arb and Antic through the interface provided by Nemo.
Hecke also applies the generic constructions and algorithms provided by Nemo to its own types. This allows for example to
work efficiently with polynomials or matrices over rings of integers without the necessity to define special types for these objects.

For most computational challenges we rely on well known techniques as described in \cite{Cohen1993, Pohst1997, Belabas2004}
(with a few modifications). However, we use new strategies for ideal arithmetic and computations with approximations.

\subsection{Fast ideal arithmetic in rings of integers}

A classical result is that in Dedekind domains, as for the ring of integers of a number field, any
ideal can be generated using only two elements, one of which can be chosen
at random in the ideal. This representation is efficient in terms of space compared to storing a
$\ZZ$-basis. However, the key problem is the lack of
efficient algorithms for working with two generators. We remark that one
operation is always efficient using two generators: powering of ideals.

A refinement of this idea, due to Pohst and Zassenhaus \cite[p. 400]{Pohst1997}, is a 
\textit{normal} presentation. Here a finite set $S$ of prime numbers
is fixed, typically containing at least all prime divisors of the norm of the ideal. Then,
based on this set, the two generators are chosen: The first as an integer
having only divisors in $S$---but typically with too high multiplicities.
The second generator then has the correct multiplicity at all prime ideals
over primes in $S$---but is allowed to have other divisors as well.

For the remainder of this section, let $K/\QQ$ be a number field of
degree $n$ and $\mathcal O_K$ be its ring of integers.

\begin{definition}
Let $S$ be a finite set of prime numbers and $A$ a nonzero ideal such that
all primes $p$ dividing the norm $N(A) = |\mathcal O_K/A|$ are in $S$.
A tuple $(a, \alpha)$ is an $S$-normal presentation for $A$ if and only if
\begin{itemize}
\setlength{\itemsep}{1pt}
\setlength{\parskip}{0pt}
\setlength{\parsep}{0pt}
\item we have $a\in A\cap \mathbb Z$, $\alpha\in A$,
\item for all prime ideals $Q$ over $p\not\in S$ we have $v_Q(a)= v_Q(A) = 0$
  for the exponential valuation $v_Q$ associated to $Q$,
\item for all prime ideals $Q$ over $p\in S$ we have $v_Q(\alpha) = v_Q(A)$.
\end{itemize}
\end{definition}

A direct application of the Chinese remainder theorem shows the existence of such normal presentations.
The algorithmic importance comes from the following result.

\begin{theorem}
Let $A = \langle a, \alpha \rangle$ be an ideal in $S$-normal presentation. Then the following hold:
\begin{enumerate}
\item We have $N(A) = \gcd(a^n, N(\alpha))$.
\item If $B = \langle b, \beta\rangle$ is a second ideal in $S$-normal
presentation for the same set $S$, then
$AB = \langle ab, \alpha\beta\rangle$ is also a $S$-normal presentation.
\item Let $\gamma = 1/\alpha$, $\langle g\rangle = \alpha \mathcal O_K \cap \ZZ$,
 $g = g_1g_2$ with coprime $g_1$, $g_2$ and $g_2$ only divisible by primes not in $S$. Then $A^{-1} = \langle 1, g_2\gamma\rangle$ is an $S$-normal presentation.
\item If $p$ is a prime number and $P= \langle p, \theta\rangle$ a prime ideal over $p$
in $\{p\}$-normal presentation, then the $\{p\}$-normal presentation
of $P^{-1} = \langle 1, \gamma\rangle$ yields a valuation element, that is,
$v_P(\beta) = \max\{ k \mid \gamma^k\beta\in \mathcal O_K\}$ for all $\beta \in \mathcal O_K$.
\end{enumerate}
\end{theorem}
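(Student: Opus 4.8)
The plan is to prove the four assertions in turn, working locally at each prime number $p$ and comparing valuations $v_Q$ at all prime ideals $Q$ of $\mathcal{O}_K$. Throughout I will use the defining properties of an $S$-normal presentation $(a,\alpha)$: that $a \in A \cap \mathbb{Z}$, $\alpha \in A$, that $v_Q(a) = v_Q(A) = 0$ for $Q$ over $p \notin S$, and that $v_Q(\alpha) = v_Q(A)$ for $Q$ over $p \in S$. The key structural fact is that an ideal is determined by its valuations at all $Q$, so it suffices to check the claimed identities prime-by-prime.

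For (1), I would compute $v_Q$ of both sides of $N(A) = \gcd(a^n, N(\alpha))$ for each rational prime $p$, viewing both as ideals (equivalently integers) of $\mathbb{Z}$. For $p \notin S$: since $v_Q(A) = 0$ for all $Q \mid p$ we have $p \nmid N(A)$; on the other side $v_Q(a) = 0$ gives $p \nmid a^n$, so $p \nmid \gcd(a^n, N(\alpha))$. For $p \in S$: write $N(A) = \prod_{Q \mid p} p^{f_Q v_Q(A)}$ with $f_Q$ the residue degree, and similarly $N(\alpha)$ has $p$-adic valuation $\sum_{Q \mid p} f_Q v_Q(\alpha) = \sum_{Q\mid p} f_Q v_Q(A)$ plus possible contributions from primes outside $S$ dividing $(\alpha)$ — but those do not affect the $p$-part. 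Meanwhile $a^n$ has $p$-adic valuation $n \cdot v_p(a)$, and since $a \in A$ one has $v_p(a) \ge \max_{Q\mid p}\lceil v_Q(A)/e_Q\rceil$, which makes $n v_p(a) \ge \sum_{Q\mid p} f_Q v_Q(A)$ by $\sum e_Q f_Q = n$; hence the $\gcd$ picks out exactly the $N(\alpha)$-contribution, which equals the $p$-part of $N(A)$.

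For (2), I would again check $v_Q$ at each $Q$. The candidate generators are $ab \in AB \cap \mathbb{Z}$ and $\alpha\beta \in AB$. For $Q$ over $p \notin S$: $v_Q(ab) = v_Q(a) + v_Q(b) = 0 = v_Q(AB)$, as required. For $Q$ over $p \in S$: $v_Q(\alpha\beta) = v_Q(\alpha) + v_Q(\beta) = v_Q(A) + v_Q(B) = v_Q(AB)$, again as required. For (3), the ideal $\langle g\rangle = \alpha\mathcal{O}_K \cap \mathbb{Z}$ is the "denominator" controlling when $\gamma = 1/\alpha$ times things lands in $\mathcal{O}_K$; I would show that $g_2\gamma$ lies in $A^{-1}$ by checking $v_Q(g_2\gamma) \ge -v_Q(A)$ everywhere, with equality at all $Q$ over $p \in S$ (using $v_Q(\gamma) = -v_Q(\alpha) = -v_Q(A)$ there and $v_Q(g_2)=0$ since $g_2$ is divisible only by primes outside $S$), and that $1 \in A^{-1} \cap \mathbb{Z}$ with $v_Q(1) = 0 = v_Q(A^{-1})$ at $Q$ over $p \notin S$. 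The factorization $g = g_1 g_2$ is precisely what strips off the part of the denominator supported outside $S$ so that the second generator has the exactly-correct valuations on $S$ and no spurious poles.

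For (4), specializing (3) to $S = \{p\}$ and $A = P = \langle p,\theta\rangle$ gives $P^{-1} = \langle 1,\gamma\rangle$ with $v_P(\gamma) = -1$ and $v_Q(\gamma) \ge 0$ for all $Q \ne P$. Then for any $\beta \in \mathcal{O}_K$, $\gamma^k\beta \in \mathcal{O}_K$ iff $v_Q(\gamma^k\beta) \ge 0$ for all $Q$; the only potentially violated inequality is at $Q = P$, where it reads $-k + v_P(\beta) \ge 0$, i.e. $k \le v_P(\beta)$. Hence $\max\{k \mid \gamma^k\beta \in \mathcal{O}_K\} = v_P(\beta)$, which is the valuation-element property.

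I expect the main obstacle to be part (1): the inequality $n\, v_p(a) \ge \sum_{Q \mid p} f_Q v_Q(A)$ must be argued carefully using $a \in A$ together with $\sum_{Q\mid p} e_Q f_Q = n$, and one must be sure that primes outside $S$ dividing $(\alpha)$ genuinely contribute nothing to the $p$-part of the $\gcd$ — this is where the hypothesis "all $p \mid N(A)$ lie in $S$" and the normality conditions are doing real work. Parts (2)–(4) are then essentially bookkeeping with valuations.
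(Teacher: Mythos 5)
The paper states this theorem without proof; the theory of $S$-normal presentations, including this result, is attributed to Pohst and Zassenhaus, and the paper simply cites their book. So there is no proof in the paper against which to compare your argument, and it must be judged on its own.

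Your valuation-by-valuation strategy is the right one and is essentially correct. A few places where the writing could be tightened. In part (1) the ceilings are superfluous: from $a\in A$ one gets $e_Q v_p(a)=v_Q(a)\ge v_Q(A)$, and multiplying by $f_Q$ and summing over $Q\mid p$ with $\sum_{Q\mid p} e_Qf_Q=n$ gives $n\,v_p(a)\ge\sum_{Q\mid p}f_Qv_Q(A)=v_p(N(A))$ directly; combined with $v_p(N(\alpha))=v_p(N(A))$ for $p\in S$ and the vanishing of all three $p$-adic valuations for $p\notin S$, the gcd identity follows. In part (2) you should note (once, to cover all the parts) that the three conditions in the definition automatically force the ideal equality: $v_Q(\langle a,\alpha\rangle)=\min(v_Q(a),v_Q(\alpha))=v_Q(A)$ for every $Q$, using $v_Q(a)\ge v_Q(A)$ from $a\in A$ and $v_Q(\alpha)\ge v_Q(A)$ from $\alpha\in A$; so checking membership plus the two valuation conditions for $(ab,\alpha\beta)$ indeed suffices. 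In part (3) the statement leaves implicit that $g_1$ is supported only on $S$ (i.e.\ $g_2$ is the \emph{full} prime-to-$S$ part of $g$); this is needed, and you should then verify that $g_2\gamma\in A^{-1}$ at primes $Q$ over $p\notin S$ via $v_Q(g_2)=v_Q(g)\ge v_Q(\alpha)$, which follows from $g/\alpha\in\mathcal O_K$. Part (4) is handled correctly.
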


It should be noted that (almost all) prime ideal are naturally computed
in $\{p\}$-normal presentation. The key problem in using the theorem for
multiplication is that both ideals need to be given by an $S$-normal presentation for
the same set $S$, which traditionally is achieved by recomputing
a suitable presentation from scratch, taking random candidates for the
second generator until it works. Based on the following lemma, we have developed a new algorithm that manages it at the cost 
of a few integer operations only.

\begin{lemma}
Let $S$ be a finite set of prime numbers and $A=\langle a, \alpha\rangle$ an ideal in
$S$-normal presentation. Let $T$ be a second set of primes and set
$s = \prod_{x\in S} x$ as well as $t = \prod_{x\in T\setminus S} x$. If $1 = \gcd(as, t) = uas + vt$,
then $\langle a, \beta \rangle$ is an $S\cup T$-normal presentation, where $\beta = vt\alpha + uas$.
\end{lemma}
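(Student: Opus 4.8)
The plan is to verify directly, prime by prime, the three conditions that define an $S\cup T$-normal presentation for the tuple $(a,\beta)$. The routine parts I would dispose of first: every prime dividing $N(A)$ already lies in $S\subseteq S\cup T$; the integer $a$ already lies in $A\cap\ZZ$; and $\beta = vt\alpha+uas$ lies in $A$ since $\alpha\in A$ and $a\in A$, and lies in $\mathcal O_K$ since $u,v,s,t\in\ZZ$. The one structural observation I want available throughout is that $a\in A$ forces $v_Q(a)\ge v_Q(A)$ for every prime ideal $Q$; this is what lets me discard the ``wrong'' summand of $\beta$ in the estimates below. For a prime $p\notin S\cup T$ there is nothing new to prove: such a $p$ lies outside $S$, so the $S$-normality of $(a,\alpha)$ already gives $v_Q(a)=v_Q(A)=0$ for every $Q$ over $p$.

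The heart of the argument is the valuation condition at a prime $p\in S\cup T$, and here I would split into two cases according to whether $p$ divides $s=\prod_{x\in S}x$ or $t=\prod_{x\in T\setminus S}x$, using the fact that reducing the Bézout identity $uas+vt=1$ modulo $p$ tells me which of $uas$ and $vt$ is coprime to $p$. If $p\in S$, then $p\mid s$, so $vt\equiv 1\pmod p$; hence $v_Q(v)=v_Q(t)=0$ and $v_Q(vt\alpha)=v_Q(\alpha)=v_Q(A)$, the last equality by $S$-normality, whereas $v_Q(uas)\ge v_Q(a)+v_Q(s)\ge v_Q(A)+v_Q(p)>v_Q(A)$ since $p\mid s$ gives $v_Q(s)\ge v_Q(p)\ge 1$; as the two terms of $\beta$ then have distinct valuations, $v_Q(\beta)=v_Q(A)$. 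If instead $p\in T\setminus S$, then $p\notin S$, so $v_Q(A)=0$, and $p\mid t$; now $uas\equiv 1\pmod p$ gives $v_Q(uas)=0$, while $v_Q(vt\alpha)\ge v_Q(t)\ge v_Q(p)\ge 1$ using $v_Q(\alpha)\ge 0$, so $v_Q(\beta)=0=v_Q(A)$. This covers every prime in $S\cup T$ and completes the verification.

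I expect the only real friction to be the bookkeeping in these two cases: one must make sure that in each case exactly one of $uas$ and $vt\alpha$ attains the valuation $v_Q(A)$ while the other is strictly larger, so that $v_Q$ of the sum is the minimum of the two — and this rests squarely on the two points that $a\in A$ makes $v_Q(a)\ge v_Q(A)$ and that $p$ divides exactly one of $s$ and $t$. As a consistency check one recovers $\langle a,\beta\rangle=A$: the inclusion $\langle a,\beta\rangle\subseteq A$ is immediate from $a,\beta\in A$, and conversely $\alpha=uas\,\alpha+vt\,\alpha$ with $uas\,\alpha\in(a)\subseteq\langle a,\beta\rangle$ and $vt\,\alpha=\beta-uas\in\langle a,\beta\rangle$, whence $\alpha\in\langle a,\beta\rangle$ and $A=\langle a,\alpha\rangle\subseteq\langle a,\beta\rangle$.
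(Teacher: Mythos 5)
Your proof is correct and takes essentially the same route as the paper: a direct, prime-by-prime verification of the valuation conditions, splitting into $p\in S$ versus $p\in T\setminus S$ and using the Bézout identity modulo $p$ to identify which summand of $\beta=vt\alpha+uas$ has the smaller valuation. Your write-up is in fact a bit more explicit than the paper's (you spell out why $v_Q(vt)=0$ resp.\ $v_Q(uas)=0$, and add a closing sanity check that $\langle a,\beta\rangle=A$), but the underlying argument is the same.
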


\begin{proof}
By definition, $a$ has only prime divisors in $S$, so $as$ and $t$ are coprime.
Also, $a$ is a possible first generator for the $S\cup T$-normal presentation.
Let $P$ be a prime ideal over a prime $p\in S$, then
$v_P(\beta) = v_P(\alpha)= v_P(A)$ since $v_P(A) <v_P(s)+v_P(a)$ due to $v_P(s)\ge 1$ and $v_P(a) \ge v_P(A)$ by definition.
For $P$ coming from $T\setminus S$, we have $v_P(\beta) = 0$ since $v_P(t)\ge 1$ and $v_P(uas) = 0$ as well.
\end{proof}

Using this lemma on both input ideals, we can obtain compatible $S$-normal 
presentations at the cost of two $\gcd$ computations, a few integer multiplications and two products of integers with algebraic numbers. The final ideal
multiplication is then a single integer product and a product of
algebraic numbers. Thus the asymptotic cost is that of a single multiplication of two algebraic numbers.
In contrast, all previous algorithms require a linear algebra step.

Finally, we improve on the computation of an $S$-normal presentation.

\begin{lemma}
If $0\ne\alpha\in \mathcal O_K$, then $\langle\alpha\rangle\cap \mathbb Z = \langle d\rangle$ where $d>0$ is minimal such that $d/\alpha\in \mathcal O_K$.
\end{lemma}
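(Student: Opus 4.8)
The plan is to identify $\langle\alpha\rangle\cap\mathbb{Z}$ with the set of positive integers $m$ satisfying $m/\alpha\in\mathcal O_K$, and then invoke the fact that every nonzero ideal of $\mathbb{Z}$ is generated by its least positive element. First I would observe that $I:=\alpha\mathcal O_K\cap\mathbb{Z}$ is an ideal of $\mathbb{Z}$: it is the intersection of the $\mathcal O_K$-ideal $\alpha\mathcal O_K$ with the subring $\mathbb{Z}$, hence closed under addition and under multiplication by integers. Since $\mathbb{Z}$ is a principal ideal domain, $I=\langle d\rangle$ for a unique $d\ge 0$, and, provided $d\ne 0$, this $d$ is the least positive element of $I$.

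Next I would check that $I\neq\{0\}$, which is really the only point needing an argument. Let $f(x)=x^k+a_{k-1}x^{k-1}+\cdots+a_1x+a_0\in\mathbb{Z}[x]$ be the minimal polynomial of $\alpha$ over $\mathbb{Q}$; its coefficients lie in $\mathbb{Z}$ because $\alpha\in\mathcal O_K$. As $\alpha\neq 0$ we have $a_0\neq 0$, and rearranging $f(\alpha)=0$ gives $a_0=-\alpha\bigl(\alpha^{k-1}+a_{k-1}\alpha^{k-2}+\cdots+a_1\bigr)$, so $a_0/\alpha\in\mathcal O_K$ and therefore $0\neq a_0\in I$. (Equivalently, one may note $N_{K/\QQ}(\alpha)\in I\setminus\{0\}$.) Hence $d>0$.

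It then remains to characterise the elements of $I$. For $m\in\mathbb{Z}$ we have $m\in I\iff m\in\alpha\mathcal O_K\iff m/\alpha\in\mathcal O_K$, the last equivalence being legitimate since $\alpha$ is invertible in $K$ and $\mathcal O_K\subseteq K$ (the case $m=0$ being trivial on both sides). Thus the positive elements of $I$ are exactly the positive integers $m$ with $m/\alpha\in\mathcal O_K$; since $d$ is the least positive element of $I$, it is precisely the least such $m$, and every such $m$ is a multiple of $d$. This yields $\langle\alpha\rangle\cap\mathbb{Z}=I=\langle d\rangle$ with $d$ as described.

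I do not expect a genuine obstacle here: the argument is essentially formal once one makes the translation $m/\alpha\in\mathcal O_K\iff m\in\alpha\mathcal O_K$. The only step requiring a moment's care is the nonvanishing of $I$, handled above via the constant term of the minimal polynomial of $\alpha$ (or the norm).
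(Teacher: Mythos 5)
Your proof is correct. The paper in fact states this lemma without proof, treating it as a standard observation, so there is nothing to compare against; your argument fills the gap in the natural way. The key translation $m\in\alpha\mathcal O_K\iff m/\alpha\in\mathcal O_K$ for $m\in\ZZ$ is exactly the right move, and you correctly identify the only nontrivial point, namely that $I=\alpha\mathcal O_K\cap\ZZ$ is nonzero, which you settle cleanly via the constant term of the minimal polynomial (or equivalently the norm $N_{K/\QQ}(\alpha)$). One cosmetic remark: you do not actually need to verify $I\neq\{0\}$ separately to conclude $I=\langle d\rangle$ with $d\geq 0$; it is needed only to ensure $d>0$ as the lemma asserts, and your argument does establish that.
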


\begin{theorem}[\cite{Pohst1997}]
Let $A$ be a nonzero ideal. Then the tuple
$(a, \alpha)$ is an $\{p \, | \, p \text{ divides }a\}$-normal presentation of $A$ if and only if
$\gcd(a, {d}/{\gcd(a, d)}) = 1$, where $d = \min(\alpha\mathcal O_K \cap \NN)$.
\end{theorem}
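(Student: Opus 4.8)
The plan is to rewrite each side of the equivalence in terms of exponential valuations and then check that they coincide. Put $S=\{p\mid p\text{ divides }a\}$, and note that we may assume $A=\langle a,\alpha\rangle$: for a normal presentation this is automatic from the valuation conditions in the definition, and it is how $A$ arises in practice.

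\emph{Reducing to one valuation inequality.} Since $a\in A\cap\ZZ$ we have $a\mathcal O_K\subseteq A$, so every prime ideal dividing $A$ lies over a prime in $S$; and $v_Q(a)=0$ whenever $Q$ lies over a prime $p\notin S$, as then $p\nmid a$. Hence the first two conditions in the definition of an $S$-normal presentation hold automatically, and $(a,\alpha)$ is an $S$-normal presentation of $A$ precisely when $v_Q(\alpha)=v_Q(A)$ for every prime $Q$ over a prime in $S$. Because $v_Q(A)=\min\bigl(v_Q(a),v_Q(\alpha)\bigr)$ while $v_Q(\alpha)\ge v_Q(A)$ always, this is in turn equivalent to the condition
\[
(\star)\colon\quad v_Q(\alpha)\le v_Q(a)\ \text{ for every prime }Q\text{ over every }p\mid a .
\]

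\emph{Bringing in $d$, and the final gcd step.} Let $e_Q$ denote the ramification index of $Q$ over $p$, so $v_Q(a)=e_Q\,v_p(a)$; since $v_p(a)$ is an integer, $v_Q(\alpha)\le e_Q v_p(a)$ is equivalent to $v_p(a)\ge\lceil v_Q(\alpha)/e_Q\rceil$, and maximising over the $Q\mid p$ turns $(\star)$ into: $v_p(a)\ge\max_{Q\mid p}\lceil v_Q(\alpha)/e_Q\rceil$ for all $p\mid a$. On the other hand, by the lemma preceding the theorem $d=\min(\alpha\mathcal O_K\cap\NN)$ is the least positive integer $m$ with $v_Q(m)\ge v_Q(\alpha)$ for all $Q$, and the identical computation with $v_Q(m)=e_Q v_p(m)$ gives $v_p(d)=\max_{Q\mid p}\lceil v_Q(\alpha)/e_Q\rceil$. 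Thus $(\star)$ is equivalent to $v_p(d)\le v_p(a)$ for every $p\mid a$. It remains to observe, purely in $\ZZ$, that this holds if and only if $\gcd\bigl(a,\,d/\gcd(a,d)\bigr)=1$: writing $x=v_p(a)$ and $y=v_p(d)$ one has $v_p\bigl(d/\gcd(a,d)\bigr)=\max(0,y-x)$, hence $v_p\bigl(\gcd(a,d/\gcd(a,d))\bigr)=\min\bigl(x,\max(0,y-x)\bigr)$, which is zero exactly when $x=0$ or $y\le x$. Chaining the three equivalences proves the theorem.

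I expect the middle step — relating $(\star)$ to the condition on $v_p(d)$ — to be the delicate one: one must check carefully that $d$ really encodes $\max_{Q\mid p}\lceil v_Q(\alpha)/e_Q\rceil$ at each rational prime, and that because $v_p(a)$ is an integer this maximum of ceilings collapses so that ``$v_Q(\alpha)\le v_Q(a)$ for all $Q\mid p$'' is exactly ``$v_p(d)\le v_p(a)$''. The reduction to the third defining condition, and the concluding elementary gcd manipulation, are routine.
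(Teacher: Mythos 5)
The paper gives no proof of this theorem --- it is cited from \cite{Pohst1997} --- so there is nothing in the paper to compare your argument against; what follows is an assessment of your proof on its own terms. Your proof is correct. One point worth stating explicitly rather than folding into ``how $A$ arises in practice'': the gcd criterion in the statement makes no reference to $A$, so the theorem only makes sense with $A=\langle a,\alpha\rangle$, and the ``only if'' direction also forces this, since for a normal presentation $v_Q(\langle a,\alpha\rangle)=\min(v_Q(a),v_Q(\alpha))=v_Q(A)$ for every prime $Q$. With that in place, each of your three equivalences checks out: the reduction to $(\star)$ is immediate from $v_Q(A)=\min(v_Q(a),v_Q(\alpha))$; the translation of $(\star)$ into $v_p(d)\le v_p(a)$ for every $p\mid a$ follows from the preceding lemma, because $d$ is the least positive integer with $v_Q(d)\ge v_Q(\alpha)$ for all $Q$, which prime-by-prime gives $v_p(d)=\max_{Q\mid p}\lceil v_Q(\alpha)/e_Q\rceil$; and the concluding identity $v_p\bigl(\gcd(a,d/\gcd(a,d))\bigr)=\min\bigl(v_p(a),\max(0,v_p(d)-v_p(a))\bigr)$ is elementary. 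The step you flagged as delicate does indeed go through: since $v_p(a)\in\ZZ$, the inequality $v_p(a)\ge v_Q(\alpha)/e_Q$ is equivalent to $v_p(a)\ge\lceil v_Q(\alpha)/e_Q\rceil$, and the ceiling commutes with the finite maximum over $Q\mid p$, so $(\star)$ at $p$ is exactly $v_p(a)\ge v_p(d)$.
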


Together with the above lemma, this allows for the rapid computation of a
normal presentation: Choose $\alpha\in A$ at random until a normal presentation
is obtained. It can be shown that, unless $a$ involves many ideals of small
norm, this is very efficient.

To illustrate the speed of our algorithm, we created a list of ideals of bounded norm (here: 400) and
took random products of 100 ideals in the field
defined by the polynomial $X^n + 2$ for $n=16, 32, 64, 128$.
The results are presented in Table~\ref{tab:ideals}.
Times are given in seconds.

\begin{table}
\center
\caption{Time (s) on ideal multiplication.}
\begin{small}
\setlength{\tabcolsep}{2.0pt}
\renewcommand{\arraystretch}{0.95}
\begin{tabular}{c c c c} \hline
$n$ & Magma & Pari & Hecke  \\ \hline
     16  &  8.44 & 0.05 & 0.02   \\
     32  &   235.82  &   0.18   &     0.04 \\
     64  &   905.44  &     0.96   &     0.06    \\
    128  &   7572.19   &     5.40    &       0.08 
\end{tabular}
\label{tab:ideals}
\end{small}
\end{table}


\subsection{The use of interval arithmetic}

One is often forced to study algebraic number fields in a real or complex setting, e.g.\ embeddings into an algebraically
closed field, or computing Dedekind zeta functions.
This can be due to an intrinsic property of the problem, or it may be the fastest (known)
 way to solve the problem. Either way, the price is working with approximations.
We give a few examples of this and show how the ball arithmetic provided by Arb is employed in Hecke for this purpose.

\subsubsection{Computing conjugates}\label{subsub:conj}

Let $K = \QQ[X]/(f)$ be an algebraic number field, where $f \in \QQ[X]$ is an irreducible polynomial of degree $d$.
We represent the elements of $K$ as polynomials of degree less than $d$.
Denoting by $\alpha_1,\dotsc,\alpha_d \in \CC$ the roots of $f$ in $\CC$, the distinct embeddings $K \to \CC$ are given by
$\sigma_i \colon K \to \CC, \bar X \mapsto \alpha_i$. For an element $\alpha$ of $K$ the complex numbers $\sigma_i(\alpha)$, $1 \leq i \leq d$
are called the conjugates of $\alpha$. Since for $\alpha \in K \setminus \QQ$ the conjugates are irrational, it is clear that we must rely on
approximations.

In Hecke this is done using ball arithmetic. Let $\alpha = \sum a_j X^j$ be an element of $K$. 
Assume that we want to find $\hat \sigma_i(\alpha) \in \RR$ with $\lvert \sigma_i(\alpha) - \hat \sigma_i(\alpha) \rvert \leq 2^{-p}$
for some precision $p \in \ZZ_{\geq 1}$.
Using Arb's polynomial root finding functionality and some initial precision $p' \geq p$ we find balls $\smash{B_{p'}^{(i)} \subseteq \RR}$ such
that $\smash{\alpha_i \in B_{p'}^{(i)}}$ and $\smash{\operatorname{diam}(B_{p'}^{(i)}) \leq 2^{-p'}}$.
Ball arithmetic then yields balls $B_\alpha^{(i)} \subseteq \RR$ with
\[ \sigma_i(\alpha) = \sum_{1 \leq j \leq d} a_j \alpha_i^j \in \sum_{1 \leq j \leq d} a_j (B_{p'}^{(i)})^j\subseteq B_\alpha^{(i)}. \]

Finally we check whether $\smash{\operatorname{diam}(B_\alpha^{(i)}) \leq 2^{-p}}$. If not, we increase the working precision $p'$ and repeat.
When finished, the midpoints of the balls $\smash{B_\alpha^{(i)}}$ are approximations $\hat \sigma_i(\alpha)$ with the desired property.
The following Hecke code illustrates this.

\begin{small}
\begin{verbatim}
QQx, x = PolynomialRing(QQ, "x")
K, a = NumberField(x^3 + 3*x + 1, "a")
alpha = a^2 + a + 1
p = 128; p' = p

while true
  CCy, y = PolynomialRing(AcbField(p'), "y")
  g = CCy(x^3 + 3*x + 1)
  rts = roots(g)
  z = map(x^2 + x + 1, rts)
  if all([ radiuslttwopower(u, -p) for u in z])
    break
  else
    p' = 2*p'
  end
end
\end{verbatim}
\end{small}

To perform the same task using floating point arithmetic would require a priori error analysis.
Arb's ball arithmetic allows the error analysis to be carried along with the computation. This allows for fast development,
while at the same time maintaining guaranteed results.

\subsubsection{Torsion units}

For an algebraic number field $K$ of degree $d$, the set $\{ \alpha \in K^\times \mid \exists n \in \ZZ_{\geq 1} \colon \alpha^n = 1 \}$ of
torsion units is a finite cyclic subgroup of $K^\times$, which plays an important role
when investigating the unit group of the ring of integers of $K$.

Given $\alpha \in K^\times$ it is important to quickly check if it is a torsion unit.
A list of possible integers $n$ with $\alpha^n = 1$ can be obtained as follows:
If $\alpha$ is a torsion unit, then $\alpha$ is a primitive $k$-th root of unity for some $k$.
In particular $K$ contains the $k$-th cyclotomic field and $\varphi(k)$ divides $n$.
Since there are only finitely many $k$ with the property that $\varphi(k)$ divides $n$,
for every such $k$ we can test $\alpha^k = 1$ and in this way decide whether $\alpha$ is a torsion unit.
While this works well for small $n$, for large $n$ this quickly becomes inefficient.

A second approach rests on the fact that torsion units are characterized by their conjugates: An element
$\alpha$ is a torsion unit if and only if $\lvert \sigma_i(\alpha) \rvert = 1$ for $1 \leq i \leq d$.
Although we can compute approximations of conjugates with arbitrary precision, since conjugates are in general
irrational, it is not possible test whether they (or their absolute value) are exactly equal to $1$.

We make use of the following result of Dobrowolski~\cite{Dobrowolski1978}, which bounds the modulus of conjugates of non-torsion units.

\begin{lemma}
If $\alpha$ is not a torsion unit, then there exists some $1 \leq i \leq k$ such that
\[ \left(1 + \frac 1 6 \frac{\log(d)}{d^2} \right) < \lvert \sigma_k(\alpha) \rvert. \]
\end{lemma}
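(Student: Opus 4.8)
The plan is to reduce the statement to a quantitative lower bound for the \emph{house} of an algebraic integer and then to run the classical resultant argument of Dobrowolski. First I would note that a non-torsion unit is in particular an algebraic integer (being a unit) that is not a root of unity (a root of unity being a torsion unit), and that it suffices to prove the bound for $\alpha$ of degree exactly $d$: if $\alpha$ has $\QQ$-degree $e$ with $2 \le e \le d$ (the case $e=1$ being vacuous, since the only rational unit is $\pm 1$), then a lower bound of the form $1 + \frac16\frac{\log e}{e^2}$ for its house implies the stated one, because $t \mapsto \frac{\log t}{t^2}$ is decreasing on $[2,\infty)$. So let $f \in \ZZ[X]$ be the monic minimal polynomial of $\alpha$, $\deg f = d$, with roots $\alpha_1,\dots,\alpha_d$; the numbers $|\sigma_i(\alpha)|$ are exactly the $|\alpha_i|$, and we must show $A := \max_i |\alpha_i| > 1 + \frac16\frac{\log d}{d^2}$, where we may assume $A$ is close to $1$.

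The algebraic core is the following. For a prime $p$, Frobenius gives $f(X^p) = f(X)^p + p\,h(X)$ with $h \in \ZZ[X]$, so that
\[ \operatorname{Res}\bigl(f(X), f(X^p)\bigr) \;=\; \prod_{i=1}^d f(\alpha_i^p) \;=\; \prod_{i=1}^d p\,h(\alpha_i) \;=\; p^d \prod_{i=1}^d h(\alpha_i), \]
and $\prod_i h(\alpha_i) = \operatorname{Res}(f,h) \in \ZZ$. Hence $p^d$ divides $\operatorname{Res}(f(X),f(X^p)) = \prod_{i,j}(\alpha_i^p - \alpha_j)$. This resultant is moreover nonzero: if $\alpha_i^p = \alpha_j$ for some $i,j$, then, since the Galois group acts transitively on $\{\alpha_1,\dots,\alpha_d\}$ and commutes with $z \mapsto z^p$, the set of conjugates is stable under the $p$-th power map; iterating, it contains a subset on which $z\mapsto z^p$ is a bijection, forcing every conjugate — hence $\alpha$ — to be a root of unity, a contradiction. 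Therefore $\bigl|\prod_{i,j}(\alpha_i^p - \alpha_j)\bigr| \ge p^d$ for every prime $p$.

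Next I would bound this product from above in terms of $A$ and the Mahler measure $M = \prod_i \max(1,|\alpha_i|)$. Writing $a_i = |\alpha_i|$ and splitting the conjugates according to whether $a_i \le 1$ or $a_i > 1$, one estimates $|f(\alpha_i^p)| = \prod_j|\alpha_i^p - \alpha_j|$ via $|\alpha_i^p - \alpha_j| \le a_i^p + a_j$ together with Mahler's inequality (length of $f$) $\le 2^d M$; for a unit one also has $|f(0)| = |N(\alpha)| = 1$, so that $|f(\alpha_i^p)| \le 1 + a_i^p\,2^d M$ whenever $a_i \le 1$. Since $M = \prod_{a_i>1}a_i \le A^d$ is close to $1$ when $A$ is, this keeps the product under control. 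Comparing with $p^d \le \prod_{i,j}|\alpha_i^p-\alpha_j|$ for a single prime already yields a lower bound for $A$, but only an exponentially weak one, because the pointwise estimates lose a factor $2^{d^2}$; following Dobrowolski one therefore runs the argument simultaneously over all primes $p \le N$. As the $p^d$ for distinct $p$ contribute to disjoint prime-power parts, $\bigl(\prod_{p\le N}p\bigr)^d = e^{d\,\theta(N)}$ divides $\prod_{p\le N}\operatorname{Res}(f(X),f(X^p))$, and Chebyshev's bound $\theta(N) \ge cN$ must outweigh the product of the analytic upper bounds; choosing $N$ of order $d$ and tracking all constants then forces $\log A \gg \frac{\log d}{d^2}$, which after bookkeeping is exactly $A > 1 + \frac16\frac{\log d}{d^2}$.

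The main obstacle is precisely this last balancing. The crude pointwise estimate is too lossy (a factor $2^{d^2}$ per prime, which even the multi-prime product cannot absorb), so one must argue more globally: exploiting that for a unit $\sum_i \log a_i = 0$ and that only few conjugates lie appreciably outside the unit circle, and — the delicate point — controlling the conjugates lying \emph{near} the unit circle, for which $|f(\alpha_i^p)|$ is not small for any fixed $p$ but whose average over $p \le N$ is close to $\log M$ by equidistribution of $\{\alpha_i^p\}$, with a separate estimate needed when $\alpha_i$ is close to a root of unity. Carrying this out uniformly in the number $k \le d-1$ of large conjugates, choosing $N$ optimally, and keeping the constants in Chebyshev's estimate and the Mahler-measure inequalities explicit enough to land the constant $\tfrac16$, is the real content of Dobrowolski's argument; no single ingredient is the difficulty, the combination is.
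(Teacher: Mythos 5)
The paper does not prove this statement at all: it is quoted directly from Dobrowolski's 1978 note \cite{Dobrowolski1978} and used as a black box, so there is no in-paper argument to compare your sketch against. Taken on its own terms, your outline does assemble the right circle of ideas for a Dobrowolski-style argument --- the Frobenius congruence $f(X^p) \equiv f(X)^p \pmod p$ forcing $p^d \mid \operatorname{Res}\bigl(f(X), f(X^p)\bigr) = \prod_{i,j}(\alpha_i^p - \alpha_j)$, the Galois argument for non-vanishing of that resultant, length and Mahler-measure upper bounds, and the multi-prime average against Chebyshev's $\theta(N)$ --- and the preliminary reduction to degree exactly $d$ via monotonicity of $t \mapsto \log t / t^2$ on $[2,\infty)$ is sound. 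But, as you yourself concede in the last paragraph, the proposal stops exactly where the content begins. The pointwise estimate $|\alpha_i^p - \alpha_j| \le |\alpha_i|^p + |\alpha_j|$ costs roughly $2^{d^2}$ per prime, which no choice of $N$ can absorb (to make $d\,\theta(N)$ exceed $\pi(N)\,d^2\log 2$ already forces $N > 2^{d}$, by which point the $A^{dp}$ contributions are out of control), and the global refinement you gesture at --- exploiting $\sum_i\log|\alpha_i|=0$, averaging $\log|f(\alpha_i^p)|$ over $p \le N$ for conjugates near the unit circle via equidistribution, separate handling near roots of unity, and explicit constants throughout to land on $\tfrac16$ --- is precisely the entire proof. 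As a roadmap this is reasonable; as a proof it has a genuine gap located at the hard part, and you cannot claim the bound without closing it.

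One further caution on hypotheses. The lemma as printed is for $\alpha \in K^\times$ that is ``not a torsion unit,'' but without an integrality assumption the conclusion is simply false: $\alpha = (3+4i)/5 \in \QQ(i)$ has both conjugates of modulus exactly $1$ and is not a root of unity. Your implicit reading --- $\alpha$ a unit of $\mathcal O_K$, hence an algebraic integer with $|N(\alpha)| = 1$ --- is the one the surrounding torsion test actually needs, and it is what your sketch quietly relies on ($|f(0)|=1$, $\sum_i\log|\alpha_i|=0$). But note that this makes your outline a \emph{variant} of Dobrowolski's argument rather than a reproduction of it: his theorem holds for arbitrary algebraic integers of degree $d$ that are not roots of unity, with no unit assumption, so the identities you lean on are not available in his generality. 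If you intend to reprove the cited result you must either drop them or argue that the unit case suffices for the application (it does, but say so).
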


We can rephrase this using approximation as follows.

\begin{lemma}\label{lem:tor}
  Let $(\!B^{(i)}_k)_{k \geq 1}$, $\!1\!\!\leq\!i\!\!\leq\!\!d$, be sequences of real balls with $\lvert \sigma_i(\alpha) \rvert \in B^{(i)}_k$ and $\max_i\operatorname{diam}(B_k^{(i)}) \to 0$ for $k \to \infty$.
  \begin{enumerate}
  \item
    If the element $\alpha$ is torsion, there exists $k \geq 1$ such that $\smash{1 < B_k^{(i)}}$ for some $1 \leq i \leq d$.
  \item
    If the element $\alpha$ is non-torsion, there exists $k \geq 1$ such that $\smash{B_k^{(i)} < 1 + \log(d)/(6d^2)}$ for all $1 \leq i \leq d$.
  \end{enumerate}
\end{lemma}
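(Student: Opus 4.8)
Throughout I read the two ball inequalities rigorously: ``$1<B_k^{(i)}$'' means that the real ball $B_k^{(i)}$ is not contained in $(-\infty,1]$ (its right endpoint exceeds $1$), and ``$B_k^{(i)}<1+\log(d)/(6d^2)$'' means that $B_k^{(i)}$ is not contained in $[1+\log(d)/(6d^2),\infty)$ (its left endpoint lies below $1+\log(d)/(6d^2)$). The plan is to combine two inputs. First, a torsion unit (say $\alpha^n=1$) satisfies $|\sigma_i(\alpha)|^n=|\sigma_i(\alpha^n)|=1$, hence $|\sigma_i(\alpha)|=1$ for every $i$. Second, Dobrowolski's lemma above, which for a non-torsion $\alpha$ produces an index $j$ with $|\sigma_j(\alpha)|>1+\log(d)/(6d^2)$. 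I will also use that, for $\alpha\notin\QQ$, the balls $B_k^{(i)}$ obtained from Arb (root isolation of the defining polynomial, followed by an absolute value) have positive radius, so each $|\sigma_i(\alpha)|$ lies in the interior of $B_k^{(i)}$.

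\emph{Part (1).} If $\alpha$ is torsion then $1=|\sigma_i(\alpha)|\in B_k^{(i)}$ for all $k$ and $i$, and (for $\alpha\notin\QQ$) interiorly; hence the right endpoint of $B_k^{(i)}$ exceeds $1$, so ``$1<B_k^{(i)}$'' already holds at $k=1$ for every $i$. This part does not use the hypothesis $\max_i\operatorname{diam}(B_k^{(i)})\to0$.

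\emph{Part (2).} Let $\alpha$ be non-torsion. An index $i$ with $|\sigma_i(\alpha)|<1+\log(d)/(6d^2)$ satisfies the conclusion for every $k$, since then the left endpoint of $B_k^{(i)}$ is $\le|\sigma_i(\alpha)|<1+\log(d)/(6d^2)$. The indices to worry about are those with $|\sigma_i(\alpha)|\ge1+\log(d)/(6d^2)$; by Dobrowolski at least one such index exists, and since $\prod_i|\sigma_i(\alpha)|=|N(\alpha)|=1$ some conjugate also has modulus below~$1$. For a delicate index the idea is to take a \emph{small} $k$, at which $B_k^{(i)}$ is still wide --- and contains $|\sigma_i(\alpha)|$ not necessarily at its own edge --- so that $B_k^{(i)}$ still reaches below $1+\log(d)/(6d^2)$; quantitatively one wants $\operatorname{diam}(B_k^{(i)})$ to exceed the finite gap $|\sigma_i(\alpha)|-(1+\log(d)/(6d^2))$ at a stage $k$ common to all the (finitely many) delicate indices. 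In the abstract statement only $\operatorname{diam}\to0$ is given, so one must add as a hypothesis that the sequence passes through such a coarse stage; in the Hecke implementation this is built in, since at $k=1$ the radii equal the starting precision and are only afterwards halved.

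\textbf{Main obstacle.} The difficulty is entirely in part~(2), and it comes from Dobrowolski's lemma itself: a non-torsion unit is \emph{compelled} to have a conjugate of modulus strictly above $1+\log(d)/(6d^2)$, so as soon as the shrinking balls resolve that conjugate the interval $B_k^{(i)}$ no longer dips below the threshold. Thus the conclusion of part~(2) is really a statement about early stages $k$, whose proof must control the \emph{initial} radii (and the location of the true value inside each ball) rather than lean on $\operatorname{diam}\to0$. For comparison, the complementary pair that actually forces the test to halt runs in the opposite direction: after finitely many precision doublings a non-torsion unit acquires some $B_k^{(i)}\subseteq(1,\infty)$ (from Dobrowolski and $\operatorname{diam}\to0$), whereas a torsion unit acquires $B_k^{(i)}\subseteq(-\infty,1+\log(d)/(6d^2))$ for every $i$ (from $|\sigma_i(\alpha)|=1$ and $\operatorname{diam}\to0$), and these two certificates are mutually exclusive.
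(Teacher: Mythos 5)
You have correctly diagnosed the real issue, but you should say so plainly rather than try to salvage the printed text: the two conclusions of the lemma are interchanged. Under the standard reading of the ball inequalities ($c<B$ meaning $c<x$ for all $x\in B$, and $B<c$ meaning $x<c$ for all $x\in B$), the printed part (1) is false --- a torsion $\alpha$ has $\lvert\sigma_i(\alpha)\rvert=1\in B_k^{(i)}$ for every $i$ and $k$, so no ball can ever lie entirely above $1$ --- and the printed part (2) is false by Dobrowolski's lemma, exactly as you observe. Your attempted rescue by reading ``$1<B$'' as ``the right endpoint of $B$ exceeds $1$'' does not yield a proof either: part (1) then needs every ball to have strictly positive radius, which is not among the hypotheses, and for part (2) you explicitly concede that the conclusion requires an extra hypothesis on the initial radii that the lemma does not supply. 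A reinterpretation under which one clause is vacuously true at $k=1$ and the other is unprovable is evidence that the interpretation is wrong, not that the lemma is subtle. (A side remark: your appeal to $\prod_i\lvert\sigma_i(\alpha)\rvert=1$ assumes $\alpha$ is a unit, which the setup does not guarantee; it is also not needed.)

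The statement the paper needs --- and clearly intends, since the algorithm described immediately afterwards requires two sound, mutually exclusive certificates whose union is eventually attained --- is the one you write down in your closing paragraph, and its proof consists exactly of the two observations you already have. If $\alpha$ is torsion, then $\lvert\sigma_i(\alpha)\rvert=1$ for all $i$; since $1\in B_k^{(i)}$ and $\max_i\operatorname{diam}(B_k^{(i)})\to 0$, for all sufficiently large $k$ every $B_k^{(i)}$ is contained in $(1-\varepsilon,1+\varepsilon)$ with $\varepsilon=\log(d)/(6d^2)$, hence $B_k^{(i)}<1+\log(d)/(6d^2)$ for all $i$. If $\alpha$ is non-torsion, Dobrowolski's lemma supplies an index $i$ with $\lvert\sigma_i(\alpha)\rvert>1+\log(d)/(6d^2)>1$, and the shrinking of $B_k^{(i)}$ around this value forces $1<B_k^{(i)}$ for all sufficiently large $k$. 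The first certificate implies torsion (contrapositive of Dobrowolski) and the second implies non-torsion, so they cannot both hold; this is what makes the test correct as well as terminating. The fix is therefore to promote your final paragraph to the proof and to record the swap of the two conclusions as an erratum, rather than to defend the printed statement.
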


It is straightforward to turn this into an algorithm to test whether $\alpha$ is torsion: Approximate the conjugates of $\alpha$ using balls
as in~\ref{subsub:conj} for some starting precision $p \geq 1$ and check if one of the statements of Lemma~\ref{lem:tor} hold.
If not, increase the precision and start over.
Since the radii of the balls tend to $0$, by Lemma~\ref{lem:tor} the algorithm will eventually terminate.

\subsubsection{Residues of Dedekind zeta functions}

For a number field $K$ and $s \in \CC$, $\operatorname{Re}(s) > 1$, the Dedekind zeta function of $K$ is defined as
\[ \zeta_K(s) = \sum_{\{0\} \neq I \subseteq \mathcal O_K} \frac 1 {N(I)^s}, \]
where the sum is over all nonzero ideals of the ring of integers $\mathcal O_K$ of $K$.
This function has an analytic continuation to the whole complex plane with a simple pole at $s = 1$. The analytic class number formula
\cite{Cohen1993} states that the residue at that pole encodes important arithmetic data of $K$:
\[ \operatorname{Res}(\zeta_K, 1) =\lim_{s \to 1}(s - 1)\zeta_K(s) = h_K \cdot \mathrm{reg}_K \cdot c_K, \]
where $h_K$ is the class number, $\mathrm{reg}_K$ is the regulator and $c_K$ is another (easy to determine) constant depending on $K$.

The analytic class number formula is an important tool during class group
computations. By approximating the residue, it allows one to certify that
tentative class and unit groups are in fact correct (see \cite{Biasse2014}).

We describe the approximation under GRH using an algorithm of Belabas and Friedmann~\cite{Belabas2015}, but it is also possible with
results from Schoof~\cite{Schoof1982} or Bach~\cite{Bach1995}.
Since the aim is to illustrate the use of ball arithmetic, we will not give detailed formulas for the approximation or the error.

\begin{theorem}[Belabas-Friedmann]
There exist functions $g_K$, $\varepsilon_K \colon \RR \to \RR$, with
\[ \lvert \log(\operatorname{Res}(\zeta_K, 1)) - g_K(x) \rvert \leq \varepsilon_K(x) \]
for all $x \geq 69$. The evaluation of $g_K(x)$ involves only prime powers $p^m \leq x$ and prime ideal powers $\mathfrak p^m$ with
$N(\mathfrak p^m) \leq x$. Moreover the function $\varepsilon_K$ is strictly decreasing and depends only on the degree and the
discriminant of $K$.
\end{theorem}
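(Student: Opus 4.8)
Write $n=[K:\QQ]$. The engine is an explicit-formula argument obtained by contour integration of a Dirichlet series that detects $\operatorname{Res}(\zeta_K,1)$ through a \emph{value} rather than a residue. The convenient choice is $\zeta_K/\zeta$: it is holomorphic and nonvanishing in a neighbourhood of $s=1$, with $(\zeta_K/\zeta)(1)=\operatorname{Res}(\zeta_K,1)$, so $\log(\zeta_K/\zeta)(s)$ is holomorphic there and equals $\log\operatorname{Res}(\zeta_K,1)$ at $s=1$; and for $\operatorname{Re}(s)>1$,
\[
\log(\zeta_K/\zeta)(s)=\sum_{\mathfrak p,\,m\geq 1}\frac{1}{m\,N\mathfrak p^{ms}}-\sum_{p,\,m\geq 1}\frac{1}{m\,p^{ms}}.
\]
Fix a smooth weight $w\colon(0,\infty)\to[0,1]$ supported on $(0,1]$ with $w\equiv 1$ near $0$, with Mellin transform $\widetilde w(z)=\int_0^\infty w(u)u^{z-1}\,du$; since $w(0^{+})=1$ and $w$ has compact support, $\widetilde w(z)=\tfrac1z+(\text{entire})$, a simple pole of residue $1$ at $z=0$. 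Pairing the series above with $\widetilde w(z)x^{z}$ and using Mellin inversion gives, for $x>1$,
\[
\sum_{N\mathfrak p^{m}\leq x}\frac{w(N\mathfrak p^{m}/x)}{m\,N\mathfrak p^{m}}-\sum_{p^{m}\leq x}\frac{w(p^{m}/x)}{m\,p^{m}}=\frac{1}{2\pi i}\int_{(c)}\widetilde w(z)\,x^{z}\log(\zeta_K/\zeta)(z+1)\,dz ,
\]
and the left-hand side is a \emph{finite} sum over prime ideal powers and prime powers of norm at most $x$, because $w$ is supported on $(0,1]$; this will be the non-elementary part of $g_K(x)$.

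Next I would shift the contour leftward. The only pole crossed before reaching $\operatorname{Re}(z)=-\tfrac12$ is the simple pole of $\widetilde w$ at $z=0$, with residue $x^{0}\log(\zeta_K/\zeta)(1)=\log\operatorname{Res}(\zeta_K,1)$, while the portion of the shifted contour between $\operatorname{Re}(z)=c$ and $\operatorname{Re}(z)=-\tfrac12$ contributes, via the Gamma factors of $\zeta_K$ and $\zeta$, an elementary function of $x$, $n$, the signature and $|d_K|$. Collecting the finite sum and this elementary piece into $g_K(x)$, we obtain
\[
\log\operatorname{Res}(\zeta_K,1)-g_K(x)=\frac{1}{2\pi i}\int_{\operatorname{Re}(z)=-1/2-\delta}\widetilde w(z)\,x^{z}\log(\zeta_K/\zeta)(z+1)\,dz+\bigl(\text{keyhole terms around the zeros}\bigr),
\]
where $\delta>0$ is small and the keyhole terms arise because $\log(\zeta_K/\zeta)(z+1)$ has logarithmic branch points at the zeros of $\zeta_K$ and of $\zeta$. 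Here GRH enters decisively: it places all those zeros on $\operatorname{Re}(s)=\tfrac12$, i.e.\ at $z=-\tfrac12+i\gamma$, so $x^{z}$ has modulus $x^{-1/2}$ along the relevant contours, whereas a hypothetical zero of real part $\beta>\tfrac12$ would contribute a term of modulus $x^{\beta-1}$, wrecking both the decay and the monotonicity of the error.

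The final and hardest step is to turn this $O(x^{-1/2})$ bound into an explicit, strictly decreasing $\varepsilon_K(x)$ depending on $K$ only through $n$ and $|d_K|$. I would choose $w$ so that its transform is positive and rapidly decaying, so that the contribution of the zeros of $\zeta_K$ becomes a sum of controlled nonnegative terms, and then bound it by Abel summation against an \emph{effective} Riemann--von Mangoldt count
\[
\#\{\rho : \lvert\operatorname{Im}(\rho)\rvert\leq T\}\leq\frac{T}{\pi}\log\!\Bigl(|d_K|\bigl(\tfrac{T}{2\pi e}\bigr)^{n}\Bigr)+c_1(n)\log\!\bigl(|d_K|\,T\bigr)+c_2(n),
\]
the analogous count for $\zeta$ contributing only absolute constants. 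This produces an explicit majorant $\varepsilon_K(x)$ that, by construction, involves $K$ only through $n$ and $|d_K|$; its strict decrease holds once $\log x$ is large enough that the $x^{-1/2}$ factor dominates the accompanying powers of $\log x$, and the threshold $x\geq 69$ is simply where all the constants above---and this monotonicity---can be pinned down. The principal obstacle is exactly this quantitative control of the zero sum: it forces a simultaneous choice of a weight $w$ with a positive-definite, fast-decaying transform and a fully effective zero-density estimate for $\zeta_K$; by comparison, identifying $\log\operatorname{Res}(\zeta_K,1)$ as the residue at $z=0$ and evaluating the Gamma-factor contributions are routine once the zeros are controlled.
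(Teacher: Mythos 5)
The paper does not prove this theorem. It is quoted as an external result, attributed to Belabas and Friedman \cite{Belabas2015}, and the surrounding text explicitly declines to reproduce the argument (``Since the aim is to illustrate the use of ball arithmetic, we will not give detailed formulas for the approximation or the error''). So there is no in-paper proof to compare your attempt against; the authoritative source is the cited article.

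As an outline of the Belabas--Friedman argument, your sketch captures the right engine: pair the Dirichlet series for $\log(\zeta_K/\zeta)$ against a compactly supported weight via a Mellin integral, shift the contour left so the pole of $\widetilde w$ at $z=0$ deposits $\log\operatorname{Res}(\zeta_K,1)$, observe that the surviving arithmetic term is a finite sum over $N\mathfrak p^m\leq x$ and $p^m\leq x$, and invoke GRH plus an effective zero count depending only on $n$ and $|d_K|$ to bound what remains. That is indeed the shape of the result. However, two things should be flagged. First, what you have is a proof \emph{plan}: the theorem's entire content is quantitative --- an explicit $\varepsilon_K$ depending only on $n$ and $|d_K|$, provably strictly decreasing for $x\geq 69$ --- and you defer exactly that (choice of weight with positive fast-decaying transform, effective constants in the Riemann--von Mangoldt count, tracking every absolute constant through the contour estimates, verifying monotonicity and locating the threshold $69$) to a final step you label ``the hardest.'' Nothing in the sketch establishes that these estimates close up. Second, working directly with $\log(\zeta_K/\zeta)$ puts logarithmic branch points, not poles, at the zeros; your parenthetical ``keyhole terms around the zeros'' acknowledges this but does not address how to sum those keyhole contributions effectively. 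Treatments in this area normally route through the logarithmic derivative $(\zeta_K'/\zeta_K)-(\zeta'/\zeta)$ or a truncated Euler product, precisely to replace branch cuts by simple poles before attempting explicit bounds; adopting that route, or justifying why the branch-cut bookkeeping is tractable, is a necessary step your proposal omits.
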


Assume that $\varepsilon > 0$ and we want to find $z \in \RR$ such that
\[ \lvert \log(\operatorname{Res}(\zeta_K, 1)) - z \rvert \leq \varepsilon.\]
We first compute an $x_0$ such that $\varepsilon_K(x_0) < \varepsilon/2$. Note that since $\varepsilon_K$ is strictly decreasing,
this can be done using ordinary floating point arithmetic with appropriate rounding modes.
Once we have this value we compute a real ball $B$ with $g_K(x_0) \in B$ and $\operatorname{diam}(B) \leq \varepsilon/2$ (as usual
we progressively increase precision until the radius of the ball is small enough).
By the choice of $x_0$, the midpoint of $B$ is an approximation to the logarithm of the residue with error at most $\varepsilon$.
Again this yields a guaranteed result, but avoids the tedious error analysis due to the form of $g_K$.

\section{Acknowledgments}

This work was supported by DFG priority project SPP 1489, DFG SFB-TRR 195 and ERC Starting Grant ANTICS 278537.

{\footnotesize
}

\end{document}